\newtheorem{theorem}{Theorem}
\newtheorem{lemma}[theorem]{Lemma}
\newtheorem{assumption}{Assumption}
\newtheorem{proposition}{Proposition}
\newtheorem{remark}{Remark}
\def\field#1{\mathbb #1}%
\def\R{\field{R}}%
\newcommand{\diag}{\mathrm{diag}}%
\def\K{\mathcal{K}}%
\newcommand \sgn   {\text{sgn}}
\newcommand{\normt}[1]{{\left\vert\kern-0.25ex\left\vert\kern-0.25ex\left\vert #1 
    \right\vert\kern-0.25ex\right\vert\kern-0.25ex\right\vert}}
\title{Fixed-Time Convergent Control Barrier Functions for Coupled Multi-Agent Systems Under STL Tasks}%
\author{Maryam Sharifi and
	Dimos~V.~Dimarogonas \thanks{This work was supported by the ERC CoG LEAFHOUND, the Swedish Research
Council (VR) and the Knut \& Alice Wallenberg Foundation (KAW).
	\newline
	The authors are with the Department of Automatic Control, School of Electrical Engineering and Computer Science, KTH Royal Institute of Technology, Stockholm, Sweden. \texttt{\{msharifi, dimos\}@kth.se }.}
}
\date{}%
\begin{document}

\maketitle%

\begin{abstract}
This paper presents a control strategy based on a new notion of time-varying fixed-time convergent control barrier functions (TFCBFs) for a class of coupled multi-agent systems under signal temporal logic (STL) tasks. In this framework, each agent is assigned a local STL task regradless of the tasks of other agents. Each task may be dependent on the behavior of other agents which may cause conflicts on the satisfaction of all tasks. Our approach finds a robust solution to guarantee the fixed-time satisfaction of STL tasks in a least violating way and independent of the agents' initial condition in the presence of undesired violation effects of the neighbor agents. Particularly, the robust performance of the task satisfactions can be adjusted in a user-specified way.
\end{abstract}

{\small {\bf Keywords:} Multi-agent systems, fixed-time stability, signal temporal logic, control barrier functions}

\section{Introduction}
Recent technological advances in distributed sensing, computation and data management have enabled us to develop smart systems using collaborative multi-agent systems. These emergent applications are required to perform more complex task specifications which are typically formulated by temporal logics \cite{kloetzer2009automatic}. Among those, signal temporal logic (STL) is more beneficial as it is interpreted over continuous-time signals \cite{maler2004monitoring}, allows for imposing tasks with strict deadlines and introduces quantitative semantics known as robustness to the physical systems \cite{fainekos2009robustness}.

Control barrier functions \cite{ames2016control} guarantee the existence of a control law that renders a desired set forward invariant. The notions of input-to-state safety and robustness have appeared in \cite{kolathaya2018input} and \cite{xu2015robustness}. Nonsmooth, Higher order and time-varying control barrier functions are provided in \cite{glotfelter2017nonsmooth}, \cite{xiao2019control}  and \cite{xu2018constrained}, respectively.
Control Lyapunov functions are control design tools to obtain a number of specific performance criteria, such as, optimality,  transient behavior or robustness. In most of the modern emergent applications such as cyber physical systems, connected automated vehicles and networked control systems, the safety property of the system performance has become a part of control design \cite{romdlony2014uniting}.

We aim to consider a class of control-affine nonlinear coupled multi-agent systems under dependent spatiotemporal constraints. Under spatial constraints, the system trajectories should evolve in some \emph{safe} sets at all times, while visiting some \emph{goal} sets in specific time intervals. These kinds of constraints are common in safety-critical applications. In addition, temporal constraints pertain to the system convergence or a task completion within a fixed-time interval, and appear in time-critical applications.

In \cite{garg2020distributed}, a distributed control strategy for safety and fixed-time stability of multi-agent systems has been provided, while \cite{black2020quadratic} considers the problem for a single-agent system subject to disturbances. However, they assume that there are no dynamical couplings among agents and their initial conditions are inside the safe sets, and provide independent constraints for safety preservation and performance satisfaction, which may cause failures in the satisfiability of all specifications. Moreover, they use time-invariant control barrier functions which contain a lower degree of freedom in comparison to the time-varying ones, and may lead to inability in achieving more complex tasks.
We introduce a time-varying fixed-time convergent control barrier function notion to guarantee the satisfaction of a set of STL tasks by maintaining the safety as well as convergence to the specified safe sets within a finite-time interval, independent of the initial conditions of the system. 

We study multi-agent systems working under \emph{local} and possibly \emph{conflicting} specifications from a fragment of STL tasks. Each agent is subject to its local task, while the task itself may depend on the behavior of other agents. Therefore, all local tasks may possibly not be satisfiable at the same time. A robust fixed-time framework is presented to find a least violating solution using the notion of fixed-time stability in a more suitable way compared to the approach presented in \cite{lindemann2019control}. Particularly in this paper, the lower bound of the presented fixed-time convergent barrier function is tunable with respect to parameters of the quadratic programming formulation, independent of initial conditions, and the time of reaching this optimal bound is characterized in a user-specified way. 
Regarding the fixed-time stability properties we ensure that if the required conditions are not satisfied initially, they will be satisfied within a fixed-time and remain satisfied thereafter. 
Therefore, we are able to unify the safety and performance criteria in one fixed-time constraint.

Section \ref{setup} gives some preliminaries on STL, multi-agent systems and problem formulation. Problem solution is stated in Section \ref{solution} and simulations along with some concluding points are presented in Sections \ref{sim} and \ref{conc}, respectively.
\section{Preliminaries and problem formulation}\label{setup}
\subsection{Signal temporal logic (STL)}
Signal temporal logic (STL) \cite{maler2004monitoring} is based on predicates $\nu$ which are obtained by evaluation of a continuously differential predicate function $h: \mathbb{R}^{d}\to\mathbb{R}$ as $\nu:=\top$ (True) if $h(\xi)\geq 0$ and $\nu:=\bot$ (False) if $h(\xi)< 0$ for $\xi\in\mathbb{R}^{d}$. The STL syntax is then given by
\begin{align}\label{sintax}
\phi ::=\top |\nu|\neg\phi|\phi' \wedge {\phi ''}| \phi' U_{\left[ {a,b} \right]}{\phi ''},
\end{align}
where $\phi'$ and $\phi ''$ are STL formulas and where $U_{\left[ {a,b} \right]}$ is the until operator with $a\leq b<\infty$. In addition, we introduce $F_{\left[ {a,b} \right]}\phi:=\top U_{\left[ {a,b} \right]}\phi$ (eventually operator) and $G_{\left[ {a,b} \right]}\phi:=\neg F_{\left[ {a,b} \right]}\neg\phi$ (always operator). Let $\xi'\models\phi$ denote the satisfaction relation, i.e., whether a signal $\xi':\mathbb{R}_{\geq 0}\to\mathbb{R}^d$ satisfies $\phi$ (at time $0$). STL semantics are defined in \cite{maler2004monitoring}. A formula $\phi$ is satisfiable if $\exists\xi':\mathbb{R}_{\geq 0}\to\mathbb{R}^d$ such that $\xi'\models\phi$.
\subsection{Coupled multi-agent systems}
Consider an undirected graph $\mathcal{G}:=(\mathcal{V}, \mathcal{E})$ where $\mathcal{V}:=\{{1,\cdots,M}\}$ indicates the set consisting of $M$ agents and $\mathcal{E}\in \mathcal{V}\times\mathcal{V}$ represents communication links. Consider $x_k\in\mathbb{R}^{n_k}$ and $u_k\in\mathbb{R}^{m_k}$ as the state and input vectors of agent $k$, respectively. Furthermore, $x:=\left[{x_1^T,\cdots, x_M^T}\right]^T\in\mathbb{R}^{n}$ with $n:=n_1+\cdots+n_M$ and 
\begin{align}\label{agent_dyn}
\dot x_k=f_k(x_k,t)+g_k(x_k,t)u_k+c_k(x,t),
\end{align}
where $f_k:\mathbb{R}^{n_k}\times\mathbb{R}_{\geq 0}\to\mathbb{R}^{n_k}$, $g_k:\mathbb{R}^{n_k}\times\mathbb{R}_{\geq 0}\to\mathbb{R}^{n_k\times m_k}$ are locally Lipschitz continuous functions. In addition, $c_k(x,t)$ models dynamical couplings between agents such as mechanical connections, unmodelled dynamics or process noise. We assume that $c_k(x,t)$ is unknown but bounded. Therefore, the control design does not require any knowledge on $x$. In other words, there exist $C_k\geq 0$, which is known by agent $k$ and $\left\| {c_k(x,t)} \right\|\leq C_k$ for all $(x,t)\in\mathbb{R}^n\times\mathbb{R}_{\geq 0}$.

Each agent $k$ is assigned its local task $\phi_k$ of the form \eqref{sintax}. The satisfaction of $\phi_k$ may depend on the behavior of other agents $j \ne  k$, which is resulted by the evolution of their state trajectories. Therefore, the agent $k$  may obtain information from the other agent's tasks. 
We assume satisfaction of all local tasks is possible regardless of the other agent tasks.
However, since the tasks are dependent, satisfiability of each local task does not imply satisfiability of the conjunction of all local tasks. 
Let the satisfaction of $\phi_k$ depend on the behavior of a subset of agents denoted by ${\cal{V}}_k\subseteq\cal{V}$ with $\left| {{\cal{V}}_k} \right|\geq 1$ where $\left| {{\cal{V}}_k} \right|$ corresponds to the cardinality of the set ${\cal{V}}_k$. 
Let $\bar x_k:= \left[ {{x_j}_1^T \cdots {x_j}_{\left| {{\cal{V}}_k} \right|}^T} \right]^T$ be the stacked state vector of all agents in ${\cal{V}}_k$ for $j_1,\cdots,j_{\left| {{\cal{V}}_k} \right|}\in{\cal{V}}_k$ and $\bar n_k:={n_j}_1+\cdots+{n_j}_{\left| {{\cal{V}}_k} \right|}$. 
	We also define the projection map $p_k:\mathbb{R}^n\to\mathbb{R}^{\bar n_k}$ considering the fact that elements of $\bar x_k$ are contained in $x$. Let the projector from a set $\mathcal{S}\in\mathbb{R}^n$ onto the formula state-space $\mathbb{R}^{\bar n_k}$ be $P_k(\mathcal{S}):=\{\bar x_k\in\mathbb{R}^{\bar n_k}|\exists x\in\mathcal{S},p_k(x):=\bar x_k\}$. 
\subsection{Time-varying fixed-time convergent barrier functions}\label{solution3}
			Let $\mathfrak{H}^k(\bar x_k,t):\mathbb{R}^{\bar n_k}\times\mathbb{R}_{\geq 0}\to\mathbb{R}$ be a continuously differentiable function.
			Similar to \cite{lindemann2019decentralized}, we introduce time-varying barrier functions $\mathfrak{H}^k(\bar x_k,t)$ to satisfy STL task $\phi_k$.
		If 
		\begin{align*}
		\mathfrak{C}_k(t):=\{\bar x_k\in\mathbb{R}^{\bar n_k}|\mathfrak{H}^k(\bar x_k,t)\geq 0\}
		\end{align*}  
		is forward invariant, then it holds that $\bar x_k\models\phi_k$.
	 Similar to \cite{lindemann2019control} the barrier functions are piecewise continuous in the second argument with discontinuities caused by switchings at instants $\{s_0^k:=0,s_1^k,s_2^k,...\}$.
		Note that the time-varying barrier functions could be constructed for the conjunctions in $\phi_k$ by using a smooth under-approximation of the min-operator. In particular, for a number of $p_k$ functions $\mathfrak{H}^k_j(\bar x_k,t)$, we have that $\mathop {\min }\limits_{j \in \{ 1, \cdots ,{p_k}\} } \mathfrak{H}^k_j({{\bar x}_k},t) \approx -\frac{1}{\eta_k}\rm{ln}(\sum\limits_{\it{j} = 1}^{\it{p_k}} {\exp ( - {\it{\eta_k\mathfrak{H}^k_{j}}}({\it{{\bar x}_k}},t))} )$ with $\eta_k>0$, which is proportionally related to the accuracy of this approximation.
		 
		In view of~\cite[Steps A, B, and C]{lindemann2019decentralized}, each corresponding barrier function to $\phi_k$ could be constructed as 
		\begin{align}\label{barrier}
		   \mathfrak{H}^k(\bar x_k,t):=-\frac{1}{\eta_k}\rm{ln}(\sum\limits_{\it{j} = 1}^{\it{p_k}} {\exp ( - {\it{\eta_k}{\mathfrak{H}^k_j}}({\it{{\bar x}_k}},t))}),
		\end{align}
		 where each ${\it{\mathfrak{H}^k_j}}({\it{{\bar x}_k}},t)$ corresponds to an always or eventually operator with a corresponding time interval $\left[a^k_j,b^k_j\right]$. The switching instants $b_j^k$ are times that the $j$th temporal operator is satisfied and its corresponding barrier function ${\it{\mathfrak{H}^k_j}}({\it{{\bar x}_k}},t)$ will be deactivated. This time-varying strategy helps reducing the conservatism in the presence of large numbers of conjunctions. 
		Due to the knowledge of $\left[a^k_j,b^k_j\right]$, the switching sequences are known in advance and at time $t\geq s_i^k$, the next switch occurs at $s_{i+1}^k:=\rm{argmin}\it{_{b_j^k\in\{b_1^k,...,b_{p_k}^k\}}\zeta(b_j^k,t)}$ where $\zeta(b_j^k,t):=\left\{ \begin{array}{l}
b_j^k - t,\;\;b_j^k - t > 0\\
\infty,\;\;\;\;\;\;\;\;\rm{otherwise}
\end{array} \right.$.
	In addition, for each switching instant $s_l^k$, it holds that $\mathop {\lim }\limits_{\tau  \to s_l^ k- } {\mathfrak{C}_k}(\tau ) \subseteq {\mathfrak{C}_k}({s_l^k})$ where $\mathop {\lim }\limits_{\tau  \to s_l^ k- } {\mathfrak{C}_k}(\tau )$ is the left-sided limit of ${\mathfrak{C}_k}({t})$ at $t=s_l^k$.

	We also make the following assumption:
	\begin{assumption}
		The functions $\mathfrak{H}^k(\bar x_k,t)$, $k\in\{{1,\cdots,K}\}$, are differentiable, the sets $\mathfrak{C}_k$ are compact, and their interior (i.e., ${\rm{int}}(\mathfrak{C}_k(t))=\{\bar x_k|\mathfrak{H}^k(\bar x_k,t)>0\})$ is non-empty for all $t\geq 0$.
	\end{assumption} 
\subsection{Problem formulation}
We consider the STL fragment
\begin{subequations}
	\begin{align}
&\psi::= \top|\nu|\psi'\wedge\psi'',\label{1st}\\
&\phi::=G_{\left[ {a,b} \right]}\psi|F_{\left[ {a,b} \right]}\psi|\psi'U_{\left[ {a,b} \right]}\psi''|\phi'\wedge\phi'',\label{2nd}
\end{align}
\end{subequations}
where $\psi',\psi''$ are formulas of class $\psi$ in \eqref{1st} and $\phi',\phi''$ are formulas of class $\phi$ in \eqref{2nd}. Consider $K$ formulas $\phi_1,\cdots,\phi_K$ of the form \eqref{2nd} and let the satisfaction of $\phi_k$ for $k\in\{{1,\cdots,K}\}$ depend on the set of agents $\mathcal{V}_k\subseteq\mathcal{V}$.
\begin{assumption}\label{concave}
	All predicate functions in $\phi_k$ are concave.
	\end{assumption}
	Concave predicate functions contain linear functions as well as functions corresponding to reachability tasks (predicates like $\left\| x-p \right\|\leq \epsilon$, $p\in\R^n$, $\epsilon\geq 0$). As the minimum of concave predicate functions is again concave, 
	 concave predicates are needed to construct valid control Lyapunov functions.
	
	Moreover, the formula dependencies should hold according to the graph topology as below.

\begin{assumption}\label{ass1}
	For each $\phi_k$ with $k\in\{{1,\cdots,K}\}$, 	it holds that $(j,k)\in\cal{E}$ for all $j\in{\cal{V}}_k\backslash \{k\}$. 
	\end{assumption}
	
	We further examine the behavior of each agent $k$ under satisfaction of the following assumption for other agents $j\ne k$, which we put in more perspective later (cf. Remark \ref{re1}).
	\begin{assumption}\label{ass2}
		Each agent $j\ne k$ applies a bounded and continuous control law $u_j(x,t)$ to achieve $x_j(t)\in\mathfrak{B}_j$ for a compact set $\mathfrak{B}_j$ and for all $t\geq 0$.
	\end{assumption}
	Considering \eqref{agent_dyn}, we can rewrite the stacked dynamics for the set of agents in $\mathcal{V}_k$ as follows
	\begin{align}\label{multi_sys}
	{\dot{\bar{x}}}_k=&\bar f_k(\bar x_k,t)+\bar g_k(\bar x_k,t)\bar u_k+\bar c_k(x,t)\nonumber\\&
	\tilde f_k(x_k,t)+\tilde g_k(x_k,t)u_k+\tilde c_k(x,t),
	\end{align}
	where $\bar f_k(\bar x_k,t):=\left[{f_j}_1({x_j}_1,t)^T,\cdots,{f_j}_{\left| {{\cal{V}}_k} \right|}({x_j}_{\left| {{\cal{V}}_k} \right|},t)^T\right]^T$,\\ $\bar g_k(\bar x_k,t):=\diag({g_j}_1({x_j}_1,t),\cdots,{g_j}_{\left| {{\cal{V}}_k} \right|}({x_j}_{\left| {{\cal{V}}_k} \right|},t))$,\\ $\bar c_k(x,t):=\left[{c_j}_1({x_j}_1,t)^T,\cdots,{c_j}_{\left| {{\cal{V}}_k} \right|}({x_j}_{\left| {{\cal{V}}_k} \right|},t)^T\right]^T$,
	and
	$\bar u_k:=\left[{u_j}_1^T,\cdots,{u_j}_{\left| {{\cal{V}}_k} \right|}^T\right]^T$
	for $j_1,\cdots,j_{\left| {{\cal{V}}_k} \right|}\in\mathcal{V}_k$.
	
	Therefore, $\tilde f_k(x_k,t):=\left[f_k(x_k,t)^T,0^T,\cdots,0^T\right]^T$, $\tilde g_k(x_k,t):=\left[g_k(x_k,t)^T,0^T,\cdots,0^T\right]^T$, $\tilde c_k(x,t):=\bar c_k(x,t)+\left[0^T,{d_j}_1(x,t)^T,\cdots,{d_j}_{\left| {{\cal{V}}_k} \right|}(x,t)^T\right]^T$ with ${d_j}(x,t):=f_j(x_j,t)+g_j(x_j,t)u_j(x,t)$.
	In the sequel, $\tilde c_k(x,t)$ is treated as an unknown disturbance. Let $\tilde C_k$ be a positive constant such that $\left\| {\tilde c_k(x,t)} \right\|\leq \tilde C_k$ for all $(x,t)\in\mathfrak{D}\times\mathbb{R}_{\geq 0}$ with $\mathfrak{D}\in\mathbb{R}^n$ an open and bounded set for which it holds that $P_k(\mathfrak{D}) \supset \mathfrak{C}_k(t)$ for all $t\geq 0$ as well as $P_j(\mathfrak{D}) \supset \mathfrak{B}_j(t)$ for all $j\ne k$. Due to Assumption \ref{ass2} and continuity property of functions $f_j(x_j,t)$ and $g_j(x_j,t)$, $\tilde C_k$ exists and acts as a non-vanishing disturbance. This will be elaborated more in Remark \ref{re1}.
	 \begin{assumption}\label{ass0}
	The function $g_k(x_k,t)$ has full row rank for $(x_k,t)\in\mathbb{R}^{n_k}\times\mathbb{R}_{\geq 0}$.
\end{assumption}
Assumption \ref{ass0} allows to decouple the construction of barrier functions from the agent dynamics. In other words, for a function $\mathfrak{H}^k({{\bar x}_k},t)$ it holds that $\frac{{\partial \mathfrak{H}^k({\bar{x}_k},t)}}{{\partial {x_k}}}{g_k}({x_k},t)=0$ if and only if $\frac{{\partial \mathfrak{H}^k({{\bar x}_k},t)}}{{\partial {x_k}}}=0$. This restriction could be relaxed for some class of dynamics using the notion of higher order barrier functions \cite{xiao2019control}.

	 	We should emphasize that if $\phi_k$ contains concave predicate functions and $\bar g_k(\bar x_k,t)$ has full row rank for all $(\bar x_k,t)\in\R^{\bar n_k}\times\R_{\geq 0}$, then $\mathfrak{H}^k(\bar x_k,t)$ can be constructed as in \cite{lindemann2019decentralized}.

	 The problem formulation is stated as follows:
	
	\textbf{Problem. 1} Find a control input $u_k(t)\in\mathcal{U}_k$, $t\geq 0$, $k\in\{{1,\cdots,K}\}$, such that for all initial conditions $\bar x_k(0)$ and in the absence of formulae dependencies and dynamic couplings, the set $\mathfrak{C}_k$ is invariant for \eqref{multi_sys}. In addition,  in the presence of such undesirable effects, the trajectories converge to a neighborhood of set $\mathfrak{C}_k$ in a fixed-time interval and independent of the initial condition of the agents; i.e., $\bar x_k(\bar T_k)\in \mathfrak{C}_k$ in a least violating way, for some user-defined $\bar T_k>0$.
		\section{Problem solution}\label{solution}
			In order to guarantee reaching the spatiotemporal constraints in the presence of non-vanishing additive disturbance in a least violating manner, we present \emph{fixed-time convergent control barrier functions} that are essential for valid behavior composition. 
			\subsection{Fixed-time convergence}
			We start with a lemma on the fixed-time convergence guarantee for a class of control Lyapunov functions (CLFs).
				\begin{lemma}\label{RFxT}
					\cite{black2020quadratic} A continuously differentiable positive-definite proper function $V_k:\mathbb{R}^{\bar n_k}\to\mathbb{R}_{\geq 0}$ is called robust fixed-time CLF (RFxT CLF) for \eqref{multi_sys}, if the following holds:
					\begin{align}\label{fx}
					 \dot V_k(\bar x_k)  \le  - {a_{1k}}V_k^{{b_{1k}}}(\bar x_k) - {a_{2k}}V_k^{{b_{2k}}}(\bar x_k) + {a_{3k}},
					\end{align}
					with $a_{1k},a_{2k}>0, a_{3k}\in\mathbb{R}, b_{1k}=1+\frac{1}{\mu_k}, b_{2k}=1-\frac{1}{\mu_k}$ for some $\mu_k>1$, along the trajectories of \eqref{multi_sys}. Then, there exists a neighborhood $D_k$ of the origin such that for all $\bar x_k(0)\in\mathbb{R}^{\bar n _k}\backslash D_k$, the trajectories of \eqref{multi_sys} reach the set $D_k$  within a fixed time $T_k$ satisfying 
					\begin{align}\label{T}
						T_k \le \left\{ \begin{array}{l}
							\frac{\mu_k }{{{a_{1k}}(c_k - b_k)}}\log (\frac{{\left|1+ c_k \right|}}{{\left|1+ b_k \right|}})\;\;\;\;\;;{a_{3k}} > 2\sqrt {{a_{1k}}{a_{2k}}} \\
							\frac{\mu_k }{{\sqrt {{a_{1k}}{a_{2k}}} }}(\frac{1}{{ k_k-1}})\;\;\;\;\;\;\;\;\;\;\;\;\;\;\;;{a_{3k}} = 2\sqrt {{a_{1k}}{a_{2k}}} \\
							\frac{\mu_k }{{{a_{1k}}{k_{1k}}}}(\frac{\pi }{2} - {\tan ^{ - 1}}{k_{2k}})\;\;\;\;;0\leq{a_{3k}} < 2\sqrt {{a_{1k}}{a_{2k}}}\\
							\frac{\mu_k \pi}{2{\sqrt {{a_{1k}}{a_{2k}}} }}\;\;\;\;\;\;\;\;\;\;\;\;\;\;\;\;\;\;\;\;\;\;\;\;; {a_{3k}}\leq 0
						\end{array} \right.,
					\end{align}
					with
					\begin{align}\label{D}
						D_k = \left\{ \begin{array}{l}
							\left\{ {\bar x_k|V_k \leq {{(\frac{{{a_{3k}} + \sqrt {{a_{3k}}^2 - 4{a_{1k}}{a_{2k}}} }}{{2{a_{1k}}}})}^{\mu_k} }} \right\}\\\;\;\;\;\;\;\;\;\;\;\;\;\;\;\;\;\;\;\;\;\;\;\;\;\;\;\;\;\;\;\;\;\;\;\;\;\;;{a_{3k}} > 2\sqrt {{a_{1k}}{a_{2k}}} \\
							\left\{ {\bar x_k|V_k \leq {k_k^{{\mu_k}} }{{(\frac{{{a_{2k}}}}{{{a_{1k}}}})}^{\frac{\mu_k }{2}}}} \right\};{a_{3k}} = 2\sqrt {{a_{1k}}{a_{2k}}} \\
								\left\{ {\bar x_k|V_k \leq {{\frac{{{a_{3k}} }}{{2{\sqrt {{a_{1k}}{a_{2k}}} }}}}}} \right\}\;\;\;\;;0\leq{a_{3k}} < 2\sqrt {{a_{1k}}{a_{2k}}}\\
							0^{\bar{n}_k}\;\;\;\;\;\;\;\;\;\;\;\;\;\;\;\;\;\;\;\;\;\;\;\;\;\;\;\;\;\;\;;{a_{3k}\leq 0}
						\end{array} \right.,
					\end{align}
					where $k_k>1$ and $b_k, c_k$ are the solutions of $\gamma_k(s)=a_{1k}s^2-a_{3k}s+a_{2k}=0$. Moreover, $k_{1k}=\sqrt{\frac{4a_{1k}a_{2k}-a_{3k}^2}{4a_{1k}^2}}$ and $k_{2k}=-\frac{a_{3k}}{\sqrt{{4a_{1k}a_{2k}-a_{3k}^2}}}$.
				\end{lemma}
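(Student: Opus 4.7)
My plan is to reduce the problem to a scalar ODE and then perform case analysis. By the comparison principle, since $\dot{V}_k \le -a_{1k}V_k^{b_{1k}} - a_{2k}V_k^{b_{2k}} + a_{3k}$, it suffices to show that the solution of the scalar ODE
\[
\dot{z} = -a_{1k}z^{b_{1k}} - a_{2k}z^{b_{2k}} + a_{3k}, \qquad z(0) = V_k(\bar{x}_k(0)),
\]
reaches $D_k$ within the stated $T_k$, since then $V_k(\bar{x}_k(t)) \le z(t)$ for all $t \ge 0$. With the change of variable $w := z^{1/\mu_k}$, one obtains $\dot{z} = \mu_k w^{\mu_k - 1}\dot{w}$ and hence
\[
\mu_k \dot{w} = -(a_{1k} w^2 + a_{2k}) + a_{3k} w^{1-\mu_k},
\]
whose right-hand side is closely related to the quadratic $\gamma_k(w) = a_{1k}w^2 - a_{3k}w + a_{2k}$ whose roots characterize $D_k$.

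Next I would separate variables to write the first hitting time of $D_k$ as
\[
T_k = \int_{w(T_k)}^{w(0)} \frac{\mu_k\, dw}{(a_{1k} w^2 + a_{2k}) - a_{3k} w^{1-\mu_k}},
\]
and split into four cases according to the sign of the discriminant $a_{3k}^2 - 4a_{1k}a_{2k}$ and the sign of $a_{3k}$. In the case $a_{3k} > 2\sqrt{a_{1k}a_{2k}}$, $\gamma_k$ has two distinct positive real roots $b_k < c_k$; bounding the integrand by a partial-fraction decomposition in $(w-b_k)(w-c_k)$ and integrating from $w(0)\to\infty$ down to $c_k$ yields the logarithmic expression, and simultaneously identifies $D_k = \{V_k \le c_k^{\mu_k}\}$. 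The cases $a_{3k} = 2\sqrt{a_{1k}a_{2k}}$ (double root, rational integrand), $0 \le a_{3k} < 2\sqrt{a_{1k}a_{2k}}$ (no real roots, so completing the square produces an arctangent and $k_{1k},k_{2k}$ appear as the usual completing-the-square constants), and $a_{3k}\le 0$ (the disturbance term is non-positive, so classical Polyakov-type estimates apply and return the clean $\mu_k \pi /(2\sqrt{a_{1k}a_{2k}})$ bound) are handled analogously.

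Finally I would verify that in each regime the improper integral $w(0)\to\infty$ converges, which gives the crucial fixed-time property—that $T_k$ is uniform in the initial condition—and read off $D_k$ as the level set on which the right-hand side in the transformed equation changes sign (equivalently, on which the relevant root of $\gamma_k$ governs the dynamics). The main obstacle is the disturbance term $a_{3k} w^{1-\mu_k}$, which is not polynomial in $w$ and prevents a direct identification with $\gamma_k(w)$; the key technical step is to establish that, in the complement of $D_k$, this perturbation can be absorbed by the coercive quadratic part so that the transformed inequality is governed by $\gamma_k$ and the stated closed-form integrals hold. Bookkeeping of the constants $k_k$, $b_k$, $c_k$, $k_{1k}$, $k_{2k}$ so they exactly match the roots (or completing-the-square parameters) of $\gamma_k$ is then a routine but careful computation.
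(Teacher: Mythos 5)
Your overall strategy---reduce to a scalar differential inequality, substitute $w=V_k^{1/\mu_k}$, and case-split on the discriminant of $\gamma_k(s)=a_{1k}s^2-a_{3k}s+a_{2k}$---is the same as the paper's, which separates variables and bounds the hitting time by $T_k\le I=\int_{V_k(\bar x_k(0))}^{V_k(\bar x_k(T_k))}\bigl(-a_{1k}V_k^{b_{1k}}-a_{2k}V_k^{b_{2k}}+a_{3k}\bigr)^{-1}dV_k$. The gap sits exactly where you flag ``the key technical step'': you assert that the non-polynomial perturbation $a_{3k}w^{1-\mu_k}$ ``can be absorbed by the coercive quadratic part'' but supply no mechanism, and without one none of the closed forms in \eqref{T} follow. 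The paper's mechanism is a one-line bound: for $a_{3k}\ge 0$ and $V_k\ge 1$ one has $a_{3k}\le a_{3k}V_k$, so the denominator is dominated by $-a_{1k}V_k^{b_{1k}}-a_{2k}V_k^{b_{2k}}+a_{3k}V_k=-V_k^{1-1/\mu_k}\,\gamma_k\bigl(V_k^{1/\mu_k}\bigr)=-w^{\mu_k-1}\gamma_k(w)$; since $dV_k=\mu_k w^{\mu_k-1}dw$, the power of $w$ cancels and the time integral collapses to $-\mu_k\int dw/\gamma_k(w)$, which is elementary in all three subcases (two real roots gives the logarithm in $b_k,c_k$; a double root gives the rational bound in $k_k$; complex roots give the arctangent with $k_{1k},k_{2k}$ after completing the square). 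You should state this substitution explicitly---it is essentially the whole proof for $a_{3k}\ge 0$.

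A second, smaller issue: the absorption is only valid on $\{V_k\ge 1\}$, and in the regime $0\le a_{3k}<2\sqrt{a_{1k}a_{2k}}$ the target set $D_k=\bigl\{V_k\le a_{3k}/(2\sqrt{a_{1k}a_{2k}})\bigr\}$ lies strictly below that level, so the trick alone does not carry the trajectory all the way into $D_k$. The paper handles the remaining descent with the cruder estimate $-a_{1k}V_k^{b_{1k}}-a_{2k}V_k^{b_{2k}}+a_{3k}\le -2\sqrt{a_{1k}a_{2k}}\,V_k+a_{3k}$ (AM--GM on the two power terms, using $b_{1k}+b_{2k}=2$), whose right-hand side is strictly negative outside $D_k$. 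Your plan should either split the integral at $V_k=1$ and treat the two pieces separately, or justify directly that the improper integral converges uniformly down to the boundary of $D_k$; as written, ``verify that the improper integral $w(0)\to\infty$ converges'' only addresses the upper end, which is the easy part (the integrand decays like $w^{-2}$). The case $a_{3k}\le 0$ is, as you say, the classical Polyakov estimate and needs no absorption.
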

				\begin{proof}
				For ${a_{3k}\leq 0}$, we obtain the standard form of the inequality which guarantees the fixed-time convergence to the origin for all $\bar x_k\in\mathbb{R}^{\bar n_k}$ (\cite{polyakov2011nonlinear}). For ${a_{3k}\geq 0}$, by rewriting \eqref{fx} we get
				\begin{align}\label{int}
I=&\int\limits_{{V_k}(\bar x_k(0))}^{{V_k}(\bar x_k({T_k}))}\frac{1}{{ - {a_{1k}}V_k^{{b_{1k}}} - {a_{2k}}V_k^{{b_{2k}}} + {a_{3k}}}}d{V_k}\nonumber\\&\ge \int\limits_0^{{T_k}} {dt = {T_k}},  
				\end{align}
				where $T_k$ is convergence time of the system trajectories to the set $D_k$. It can be shown that for all $\bar x_k\notin D_k$, the system trajectories reach the set $D_k$ in a fixed-time interval.
				
			To prove this claim, first consider $0\leq {a_{3k}}< {2{\sqrt {{a_{1k}}{a_{2k}}} }}$. We have that $- {a_{1k}}V_k^{{b_{1k}}} - {a_{2k}}V_k^{{b_{2k}}} + {a_{3k}}\leq -2{\sqrt {{a_{1k}}{a_{2k}}} }\bar V_k+{a_{3k}}$ for all $\bar V_k\geq {\frac{{{a_{3k}} }}{{2{\sqrt {{a_{1k}}{a_{2k}}} }}}}$. Thus, for all $V_k(\bar x_k(0))\geq \bar V_k\geq 1$ the left integrand in \eqref{int} is negative and hence, the following is obtained:
					\begin{align}\label{int2}
\int\limits_{{V_k}(\bar x_k(0))}^{1} \frac{d{V_k}}{{ - {a_{1k}}V_k^{{b_{1k}}} - {a_{2k}}V_k^{{b_{2k}}} + {a_{3k}}}} \le\nonumber\\ \int\limits_{{V_k}(\bar x_k(0))}^{1} \frac{d{V_k}}{{ - {a_{1k}}V_k^{{b_{1k}}} - {a_{2k}}V_k^{{b_{2k}}} + {a_{3k}}V_k}}.
				\end{align}
				We obtain $T_k\le I\le \frac{\mu_k }{{{a_{1k}}{k_{1k}}}}(\frac{\pi }{2} - {\tan ^{ - 1}}{k_{2k}})$ by evaluating the second integral in \eqref{int2}.
				
				For ${a_{3k}}\ge {2{\sqrt {{a_{1k}}{a_{2k}}} }}$ we have $\bar V_k\geq 1$. Therefore, for $V_k(\bar x_k)\geq \bar V_k\geq 1$ we get $- {a_{1k}}{V_k^{{b_{1k}}}} - {a_{2k}}{V_k^{{b_{2k}}}} + {a_{3k}}\leq - {a_{1k}}{V_k^{{b_{1k}}}} - {a_{2k}}{V_k^{{b_{2k}}}} + {a_{3k}}V_k$ which leads to 
					\begin{align*}
I \le \int\limits_{{V_k}(\bar x_k(0))}^{\bar V_k} \frac{d{V_k}}{{ - {a_{1k}}V_k^{{b_{1k}}} - {a_{2k}}V_k^{{b_{2k}}} + {a_{3k}}V_k}}.
				\end{align*}
				Solving the above integral for $\bar V_k\geq 1$ leads to $I\le \frac{\mu_k }{{{a_{1k}}(c_k - b_k)}}\log (\frac{{\left|1+ c_k \right|}}{{\left|1+ b_k \right|}})$ with $c_k\ge b_k$.
				
				Finally, for ${a_{3k}}={2{\sqrt {{a_{1k}}{a_{2k}}} }}$ we have $c_k = b_k=-\sqrt{\frac{a_{2k}}{a_{1k}}}$. Hence,
					\begin{align*}
I &\le \int\limits_{{V_k}(\bar x_k(0))}^{\bar V_k} \frac{d{V_k}}{{ - {a_{1k}}V_k^{{b_{1k}}} - {a_{2k}}V_k^{{b_{2k}}} + {a_{3k}}V_k}}\\&=\frac{\mu_k}{a_{1k}}(\frac{1}{c_k+\bar V^{\frac{1}{\mu_k}}}-\frac{1}{c_k+{V_k}(\bar x_k(0))})\\
&\le \frac{\mu_k}{a_{1k}}\frac{1}{c_k+\bar V^{\frac{1}{\mu_k}}}\le\frac{\mu_k}{\sqrt{a_{1k}a_{2k}}(k_k-1)},
				\end{align*}
				where the last inequality follows from the fact that $\bar V_k^{\frac{1}{\mu_k}}\ge -k_kc_k$ for $k_k>1$ results in a finite non-negative value for $I$.
				The proof is complete.
				\end{proof}
			\begin{remark}
				Note that an upper-bound for $T_k$ could be considered as a user-defined fixed convergence time.
				\end{remark}
					 Next, we provide a theorem to guarantee the robust fixed-time forward invariance property of the set $\mathfrak{C}_k(t)$. 
					 By the term \emph{robust} we mean that in the absence of agent couplings and violating effects of the other local tasks, the fixed-time convergence to the set  $\mathfrak{C}_k(t)$ is guaranteed. However, in the presence of such undesirable effects, the fixed-time convergence to the set $\mathfrak{C}_{k,rf}(t)\supseteq\mathfrak{C}_k(t)$, which later will be defined by Proposition \ref{convergence}, is guaranteed.
			
				\begin{theorem}
					Consider a multi-agent network consisting of $M$ agents subject to the dynamics of \eqref{agent_dyn} under Assumption \ref{ass0} and $K$ formulas $\phi_k$ of the form \eqref{2nd} under Assumption \ref{concave}.
					Let $\mathfrak{H}^k(\bar x_k,t)$ be a time-varying barrier function associated with the task $\phi_k$ according to Section \ref{solution3}.
					If for some positive constants $\alpha_k$, $\beta_k$, $\gamma_{1k}>1$, $\gamma_{2k}<1$, for some open set $P_k(\mathfrak{D})$ with $P_k(\mathfrak{D})\supset \mathfrak{C}_k(t)$ for all $t\geq 0$, and for all $(\bar x_k,t)\in P_k(\mathfrak{D})\times(s^k_j,s^{k}_{j+1})$, there exists a  control law $u_k(x_k,t)$ for agent $k\in\mathcal{V}_k$ such that
					\begin{equation}\label{Ineq1}
					\begin{array}{l}
					\frac{{\partial 	\mathfrak{H}^k({{\bar x}_k},t)}}{{\partial {x_k}}}\left( {{f_k}({x_k},t) + {g_k}({x_k},t){u_k}} \right)+ \frac{{\partial 	\mathfrak{H}^k({{\bar x}_k},t)}}{{\partial t}}\ge\\\left\| {\frac{{\partial \mathfrak{H}^k({{\bar x}_k},t)}}{{\partial {\bar x}_k}}} \right\|\tilde C_k- {\alpha _k}\;\sgn({	\mathfrak{H}^k({{\bar x}_k},t)}){	|\mathfrak{H}^k({{\bar x}_k},t)|}^{\gamma _{1k}}\\ -{\beta _k}\;\sgn({	\mathfrak{H}^k({{\bar x}_k},t)}){	|\mathfrak{H}^k({{\bar x}_k},t)|}^{\gamma _{2k}},
					\end{array}
					\end{equation}
					then $\mathfrak{C}_k(t)$ is robust fixed-time forward invariant and $\mathfrak{H}^k({{\bar x}_k},t)$ is a valid time-varying fixed-time convergent control barrier function (TFCBF).
				\end{theorem}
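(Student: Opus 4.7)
The plan is to convert hypothesis \eqref{Ineq1} into a scalar differential inequality on $\mathfrak{H}^k$ and then invoke Lemma \ref{RFxT} through a sign-adapted Lyapunov construction.

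First, I would differentiate $\mathfrak{H}^k(\bar x_k,t)$ along \eqref{multi_sys}. Thanks to the block structure of $\tilde f_k$ and $\tilde g_k$, only agent $k$'s own control appears explicitly in the decomposition
\begin{align*}
\dot{\mathfrak{H}^k} = \frac{\partial \mathfrak{H}^k}{\partial x_k}\bigl(f_k(x_k,t)+g_k(x_k,t)u_k\bigr) + \frac{\partial \mathfrak{H}^k}{\partial \bar x_k}\tilde c_k(x,t) + \frac{\partial \mathfrak{H}^k}{\partial t}.
\end{align*}
The Cauchy--Schwarz inequality together with the uniform bound $\|\tilde c_k\|\leq\tilde C_k$ on $\mathfrak{D}$ yields $\frac{\partial \mathfrak{H}^k}{\partial \bar x_k}\tilde c_k \geq -\|\frac{\partial \mathfrak{H}^k}{\partial \bar x_k}\|\tilde C_k$, and substituting \eqref{Ineq1} then gives
\begin{align*}
\dot{\mathfrak{H}^k} \geq -\alpha_k\sgn(\mathfrak{H}^k)|\mathfrak{H}^k|^{\gamma_{1k}} -\beta_k\sgn(\mathfrak{H}^k)|\mathfrak{H}^k|^{\gamma_{2k}}
\end{align*}
on every continuity interval $(s^k_j,s^{k}_{j+1})$.

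Next, I would introduce $V_k:=\max\{0,-\mathfrak{H}^k\}$, which is continuous, vanishes exactly on $\mathfrak{C}_k(t)$, and is differentiable along trajectories wherever $\mathfrak{H}^k\neq 0$. On $\{\mathfrak{H}^k<0\}$ one has $\sgn(\mathfrak{H}^k)=-1$ and $|\mathfrak{H}^k|=V_k$, so the preceding bound becomes
\begin{align*}
\dot V_k \leq -\alpha_k V_k^{\gamma_{1k}}-\beta_k V_k^{\gamma_{2k}},
\end{align*}
which matches \eqref{fx} with $a_{1k}=\alpha_k$, $a_{2k}=\beta_k$, $a_{3k}=0$, and $\mu_k>1$ chosen so that $\gamma_{1k}=1+1/\mu_k$ and $\gamma_{2k}=1-1/\mu_k$. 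Lemma \ref{RFxT} then delivers fixed-time convergence of $V_k$ to zero from any $\bar x_k(0)\in P_k(\mathfrak{D})$, with the explicit bound $T_k\leq \mu_k\pi/(2\sqrt{\alpha_k\beta_k})$. Equivalently, the trajectory reaches $\mathfrak{C}_k(t)$ within this fixed time, independent of initial condition.

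Finally, forward invariance is immediate on each continuity interval: at any boundary point with $\mathfrak{H}^k=0$ both right-hand terms vanish, so $\dot{\mathfrak{H}^k}\geq 0$ and the trajectory cannot exit $\mathfrak{C}_k(t)$. Propagation across each switching instant $s^k_l$ is handled via the property $\lim_{\tau\to s_l^{k-}}\mathfrak{C}_k(\tau)\subseteq\mathfrak{C}_k(s^k_l)$ recalled in Section \ref{solution3}. I expect the main technical obstacle to be the disturbance accounting: the bound $\tilde C_k$ lumps together the structural coupling $c_k(x,t)$ and the neighbor-induced signals $d_j(x,t)$, whose boundedness rests on Assumption \ref{ass2} together with continuity of $f_j,g_j$ on the compact $\mathfrak{D}$, and this bookkeeping must remain uniform across switches so that the inequality for $V_k$ persists on $[0,\infty)$ and Lemma \ref{RFxT} can be applied globally.
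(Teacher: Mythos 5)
Your proof is correct and follows essentially the same route the paper takes (its argument for this theorem is carried by the proof of Proposition \ref{convergence}): the same sign-adapted Lyapunov candidate $V_k=\max\{0,-\mathfrak{H}^k(\bar x_k,t)\}$, the same reduction to a scalar differential inequality, and the same appeal to Lemma \ref{RFxT}, with invariance on $\mathfrak{C}_k(t)$ and propagation across the switches $s_l^k$ handled as you describe. The only difference is that by retaining the full compensation term $\left\| {\frac{{\partial \mathfrak{H}^k({{\bar x}_k},t)}}{{\partial {\bar x}_k}}} \right\|\tilde C_k$ from \eqref{Ineq1} you land on $a_{3k}=0$ and hence fixed-time convergence to $\mathfrak{C}_k(t)$ itself, whereas the paper's proposition works with the looser bound $\dot V_k\leq\delta_k-\alpha_kV_k^{\gamma_{1k}}-\beta_kV_k^{\gamma_{2k}}$ (i.e., $a_{3k}=\delta_k$) so as to also cover the QP-relaxed setting, concluding only convergence to the inflated set $\mathfrak{C}_{k,rf}(t)$.
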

\begin{remark}\label{discuss}
We have substituted $\left\| {\frac{{\partial \mathfrak{H}^k({{\bar x}_k},t)}}{{\partial {\bar x}_k}}} \right\|\tilde C_k$ as an upper-bound for $\frac{{\partial \mathfrak{H}^k({{\bar x}_k},t)}}{{\partial {\bar x}_k}}\tilde c_k$ in the valid control barrier function condition \eqref{Ineq1}, since that way it may contain feasibility issues if $\frac{{\partial 	\mathfrak{H}^k({{\bar x}_k},t)}}{{\partial {x_k}}}{g_k}({x_k},t)=0$ and $\frac{{\partial \mathfrak{H}^k({{\bar x}_k},t)}}{{\partial {\bar x_k}}}\tilde c_k(x,t)\ne 0$. Then, satisfaction of the inequality would rely on $\frac{{\partial 	\mathfrak{H}^k({{\bar x}_k},t)}}{{\partial {\bar x_k}}}\tilde c_k(x,t)$ which comes from the behavior of the ${\cal{V}}_k\backslash \{k\}$ that are unknown to agent $k$. As mentioned before, we treat this term as an unknown disturbance and give an estimation for $\tilde C_k$ in the sequel.
Furthermore, consider $\rho_k$ as some extended class $\K$ function \cite{ames2016control} with
\begin{align}\label{rho}
 \rho_k(r)= \alpha_k\sgn(r)|r|^{\gamma_{1k}}+\beta_k\sgn(r)|r|^{\gamma_{2k}}.  
\end{align}
By Assumptions \ref{concave} and \ref{ass0},  the functions $\mathfrak{H}^k({{\bar x}_k},t)$  can be constructed with $\rho_k$
satisfying~\cite[Lemma 4]{lindemann2019decentralized} to ensure that $\frac{{\partial {\mathfrak{H}^k}({{\bar x}_k},t)}}{{\partial {t}}}>  - {\rho _k}\left( {{\mathfrak{H}^k}({{\bar x}_k},t)} \right)+ \chi$ for some $\chi>0$ when $\frac{{\partial {\mathfrak{H}^k}({{\bar x}_k},t)}}{{\partial {\bar{x}_k}}}\bar g_k(\bar x_k,t)=0$. 
This ensures that all agents in $\mathcal{V}_k$ can use a collaborative control law as presented in~\cite[Theorem 1]{lindemann2019decentralized} and choosing \eqref{rho} gives the fixed-time convergence property without causing feasibility problems for \eqref{Ineq1}. Then, possible violation in \eqref{Ineq1} comes from conflicting local objectives.
We will treat the task conflictions by a relaxation term $\varepsilon_k$ in the quadratic program formulation (cf. Section \ref{QP_for}).
\end{remark}
	We defined a class of control Lyapunov functions (RFxT CLFs) with a user-defined fixed-time convergence guarantee in Lemma \ref{RFxT} with the convergence time (set) (i.e., $T_k$ ($D_k$)), characterized by given parameters $a_{1k}$, $a_{2k}$, $b_{1k}$, $b_{2k}$ and independent of the initial conditions $\bar x_k(0)$. The following Proposition proves that the inequality \eqref{Ineq1} leads to a robust fixed-time convergence to the predefined predicates.
	\begin{proposition}\label{convergence}
		Consider the set $\mathfrak{C}_k(t)$ associated with $\mathfrak{H}^k(\bar x_k,t)$ defined on $P_k(\mathfrak{D})$ with $ \mathfrak{C}_k(t)\subset P_k(\mathfrak{D})$. Let positive constants $\alpha_k$, $\beta_k$, $\delta_k$, $\gamma_{1k}=1+\frac{1}{\mu_k}$, $\gamma_{2k}=1-\frac{1}{\mu_k}$, $\mu_k>1$, be given. Then, any controller $u_k:P_k(\mathfrak{D}) \to\mathcal{U}_k$ such that \eqref{Ineq1} is satisfied for the system \eqref{multi_sys} with $\left\| {\tilde c_k(x,t)} \right\|\leq \tilde C_k$ and $\delta_k\geq\left\| {\frac{{\partial \mathfrak{H}^k({{\bar x}_k},t)}}{{\partial {\bar x}_k}}} \right\|\tilde C_k$, for all $(\bar x_k,t)\in P_k(\mathfrak{D})\times\mathbb{R}_{\geq 0}$, renders the set $\mathfrak{C}_k(t)$ robust fixed-time convergent. In particular, given the initial condition $\bar x_k(0)\in P_k(\mathfrak{D})\backslash \mathfrak{C}_k(0)$, the controller drives the state trajectories $\bar x_k(t)$ within a fixed-time given by \eqref{T} to the set $\mathfrak{C}_{k,rf}(t)$ given as follows. 
		\begin{align*}
		\mathfrak{C}_{k,{rf}}(t):=\{\bar x_k\in\mathbb{R}^{\bar n_k}|\mathfrak{H}^k(\bar x_k,t)\geq -\epsilon_{k,\max}\},
		\end{align*}  					
					where
					\begin{align*}
\epsilon_{k,\max}= \left\{ \begin{array}{l}
							  {(\frac{{{\delta_{k}} + \sqrt {{\delta_{k}}^2 - 4{\alpha_{k}}{\beta_{k}}} }}{{2{\alpha_{k}}}})}^{\mu_k}\;\;;{\delta_{k}} > 2\sqrt {{\alpha_{k}}{\beta_{k}}} \\
							{k_k^{{\mu_k}} }{{(\frac{{{\beta_{k}}}}{{{\alpha_{k}}}})}^{\frac{\mu_k }{2}}}\;\;\;\;\;\;\;\;\;\;\;\;\;\;\;;{\delta_{k}} = 2\sqrt {{\alpha_{k}}{\beta_{k}}} \\
							{{\frac{{{\delta_{k}} }}{{2{\sqrt {{\alpha_{k}}{\beta_{k}}} }}}}}\;\;\;\;\;\;\;\;\;\;\;\;\;\;\;\;\;\;\;\;\;;0\leq{\delta_{k}} < 2\sqrt {{\alpha_{k}}{\beta_{k}}}.
						\end{array} \right.
					\end{align*}
	\end{proposition}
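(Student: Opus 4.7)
The plan is to apply Lemma~\ref{RFxT} after identifying $V_k(\bar x_k,t):=-\mathfrak{H}^k(\bar x_k,t)$ as a candidate RFxT CLF on the complement of $\mathfrak{C}_k(t)$. First I would restrict attention to a maximal time interval on which the trajectory $\bar x_k(t)$ lies outside $\mathfrak{C}_k(t)$ (so $\mathfrak{H}^k<0$ and hence $V_k>0$); on this interval $\sgn(\mathfrak{H}^k)=-1$ and $|\mathfrak{H}^k|^{\gamma}=V_k^{\gamma}$, which is what converts the signed-power right-hand side of \eqref{Ineq1} into a standard fixed-time dissipation term.

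Next I would differentiate $\mathfrak{H}^k$ along \eqref{multi_sys} and decompose it as
\[
\dot{\mathfrak{H}}^k
= \tfrac{\partial \mathfrak{H}^k}{\partial x_k}\bigl(f_k+g_ku_k\bigr)
+\tfrac{\partial \mathfrak{H}^k}{\partial\bar x_k}\tilde c_k
+\tfrac{\partial \mathfrak{H}^k}{\partial t}.
\]
The first and third terms are lower-bounded using the controller hypothesis \eqref{Ineq1}, while the disturbance term is lower-bounded via Cauchy--Schwarz together with $\|\tilde c_k(x,t)\|\le\tilde C_k$ and the standing assumption $\delta_k\ge\|\partial_{\bar x_k}\mathfrak{H}^k\|\tilde C_k$. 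Combining these bounds and flipping signs to pass to $V_k$ yields, on the current interval, a differential inequality of exactly the form \eqref{fx},
\[
\dot V_k \le -\alpha_k V_k^{\gamma_{1k}}-\beta_k V_k^{\gamma_{2k}}+\delta_k,
\]
with the identifications $a_{1k}=\alpha_k$, $a_{2k}=\beta_k$, $a_{3k}=\delta_k$, $b_{1k}=\gamma_{1k}=1+\tfrac{1}{\mu_k}$, $b_{2k}=\gamma_{2k}=1-\tfrac{1}{\mu_k}$.

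Lemma~\ref{RFxT} then directly supplies the fixed-time bound $T_k$ from \eqref{T} and the attractor $D_k$ from \eqref{D}. Since $V_k\le \bar V$ is equivalent to $\mathfrak{H}^k\ge -\bar V$, the three regimes of $D_k$ (according to whether $\delta_k$ exceeds, equals, or falls below $2\sqrt{\alpha_k\beta_k}$) translate verbatim into the three cases of $\epsilon_{k,\max}$ in the statement, so $\bar x_k$ enters $\mathfrak{C}_{k,rf}(t)=\{\bar x_k:\mathfrak{H}^k(\bar x_k,t)\ge-\epsilon_{k,\max}\}$ in time no greater than $T_k$, independently of $\bar x_k(0)$.

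The step I expect to be the main obstacle is handling the switching times $\{s_l^k\}$ cleanly: the barrier is only piecewise continuous in $t$, and I need the inclusion $\lim_{\tau\to s_l^{k-}}\mathfrak{C}_k(\tau)\subseteq\mathfrak{C}_k(s_l^k)$ stated in Section~\ref{solution3} to guarantee that $V_k$ does not jump upward across a switch, so that the fixed-time argument of Lemma~\ref{RFxT} can be stitched together across intervals without losing progress toward $\mathfrak{C}_{k,rf}$. A minor side point is verifying that $V_k$ is still proper and positive definite with respect to the relaxed target across switches, which follows from the construction of $\mathfrak{H}^k$ in \eqref{barrier} and the continuity of its spatial arguments.
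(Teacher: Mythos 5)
Your proposal is correct and follows essentially the same route as the paper: the paper's proof sets $V_k(\bar x_k,t)=\max\{0,-\mathfrak{H}^k(\bar x_k,t)\}$, derives the same differential inequality $\dot V_k\le \delta_k-\alpha_k V_k^{\gamma_{1k}}-\beta_k V_k^{\gamma_{2k}}$ on $P_k(\mathfrak{D})\backslash\mathfrak{C}_k(t)$, and invokes Lemma~\ref{RFxT} to translate $D_k$ into the three cases of $\epsilon_{k,\max}$, exactly as you do. Your version merely spells out the decomposition of $\dot{\mathfrak{H}}^k$ and the sign bookkeeping in more detail, and your remark about stitching across the switching instants via $\lim_{\tau\to s_l^{k-}}\mathfrak{C}_k(\tau)\subseteq\mathfrak{C}_k(s_l^k)$ addresses a point the paper's proof leaves implicit.
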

	\begin{proof}
	Consider the RFxTCLFs $V_k(\bar x_k,t)=\max {\{ 0,-\mathfrak{H}^k(\bar x_k,t)}\}$ for each predicate $\phi_k$. These functions satisfy $V_k(\bar x_k,t)=0$ for $\bar x_k(0)\in \mathfrak{C}_k(0)$.  Therefore, as long as $\mathfrak{H}^k(\bar x_k,t)\geq 0$, $V_k$ remains $0$ and then $\bar x_k(t)\in \mathfrak{C}_k(t)$, $t\geq 0$.
	Moreover, $V_k(\bar x_k,t)>0$ for $\bar x_k\in P_k(\mathfrak{D})\backslash \mathfrak{C}_k(t)$ and 
	\begin{equation*}
	\dot V_k(\bar x_k,t)\leq {\delta _{k}}- {\alpha _k}	{V_k(\bar x_k,t)}^{\gamma _{1k}} -{\beta _k} {	V_k(\bar x_k,t)}^{\gamma _{2k}}.
	\end{equation*}
	Thus, according to Lemma \ref{RFxT}, the convergence of $V_k(\bar x_k,t)$ to the set $D_k$ in a fixed-time $T_k$ is guaranteed. In other words, 
		\begin{align}\label{D3}
						\mathfrak{H}^k({{\bar x}_k},t) \geq \left\{ \begin{array}{l}
							  -{(\frac{{{\delta_{k}} + \sqrt {{\delta_{k}}^2 - 4{\alpha_{k}}{\beta_{k}}} }}{{2{\alpha_{k}}}})}^{\mu_k}\;\;;{\delta_{k}} > 2\sqrt {{\alpha_{k}}{\beta_{k}}} \\
							-{k_k^{{\mu_k}} }{{(\frac{{{\beta_{k}}}}{{{\alpha_{k}}}})}^{\frac{\mu_k }{2}}}\;\;\;\;\;\;\;\;\;\;\;\;\;\;\;;{\delta_{k}} = 2\sqrt {{\alpha_{k}}{\beta_{k}}} \\
							-{{\frac{{{\delta_{k}} }}{{2{\sqrt {{\alpha_{k}}{\beta_{k}}} }}}}}\;\;\;\;\;\;\;\;\;\;\;\;\;\;\;\;\;\;\;\;\;;0\leq{\delta_{k}} < 2\sqrt {{\alpha_{k}}{\beta_{k}}}\\
							0\;\;\;\;\;\;\;\;\;\;\;\;\;\;\;\;\;\;\;\;\;\;\;\;\;\;\;\;\;\;\;\;\;;{\delta_{k}\leq 0},
						\end{array} \right.
					\end{align}
			which ensures the convergence to set $\mathfrak{C}_{k,rf}(t)$.
\end{proof}

			\begin{remark}
				Note that in the presence of non-vanishing disturbances, it is not possible to guarantee the convergence of state trajectories to the desired set $\mathfrak{C}_k$. The set $\mathfrak{C}_{k,rf}$ gives an estimate of the neighborhood that the system trajectories converge to, within a fixed-time interval  upper-bounded by \eqref{T}.
				In the cases that the system dynamics does not contain any couplings and task conflictions ($\delta_k=0$), the convergence to $\mathfrak{C}_k$ is guaranteed.  
				As we consider the conflicting specifications and couplings between the agents and model them by constant upper-bounds, the system contains non-vanishing disturbance and hence, Lemma \ref{RFxT} is applied here.  
				\end{remark}
\subsection{QP based formulation}\label{QP_for}
We now formulate a quadratic program that renders $\mathfrak{C}_k(t)$ \emph{robust fixed-time convergent} in the presence of dynamic couplings as well as task conflictions. 
Define $z_k=\left[u_k^T, \varepsilon_k\right]^T\in\mathbb{R}^{m_k+1}$, and consider the following optimization problem to find a control input that solves Problem 1.
\begin{align}
&\mathop {\min }\limits_{u_k,\varepsilon_k\in\mathbb{R}_{\geq 0}} \frac{1}{2}{z_k^T}z_k\nonumber\\
&\rm{s.t.}\;\;\;
\it{\frac{{\partial {	\mathfrak{H}^k({{\bar x}_k},t)}}}{{\partial {x_k}}}\left( {{f_k}({x_k},t) + {g_k}({x_k},t){u_k}} \right)
+ \frac{{\partial {	\mathfrak{H}^k({{\bar x}_k},t)}}}{{\partial t}}}\nonumber\\& \;\;\;\;\;\;\;\;\geq\delta_k- {\alpha _k}\; \sgn({	\mathfrak{H}^k({{\bar x}_k},t)}){	|\mathfrak{H}^k({{\bar x}_k},t)|}^{\gamma _{1k}}\nonumber\\& \;\;\;\;\;\;\;\; -{\beta _k}\; \sgn({	\mathfrak{H}^k({{\bar x}_k},t)}){	|\mathfrak{H}^k({{\bar x}_k},t)|}^{\gamma _{2k}}-\varepsilon_k,\label{CLF}
\end{align}
where $\delta_k\geq\left\| {\frac{{\partial \mathfrak{H}^k({{\bar x}_k},t)}}{{\partial {\bar x}_k}}} \right\|\tilde C_k$. Constraint \eqref{CLF} corresponds to the fixed-time convergence of the closed-loop trajectories to the set $\mathfrak{C}_{k,rf}(t)$, where ${\alpha _{k}}, {\beta _{k}}>0$, $\gamma_{1k}=1+\frac{1}{\mu_k}$, $\gamma_{2k}=1-\frac{1}{\mu_k}$, $\mu_k>1$ are fixed.  Moreover, $\varepsilon_k\geq0$ relaxes QP in the presence of conflicting tasks and minimizing it results in a least violating solution.
	\begin{remark}\label{re1}
	Note that our analysis relies on Assumption \ref{ass2}. However, this assumption is obsolete if \eqref{CLF} is solved for each agent $k$. Thus, to give an estimation on $\tilde C_k$, first the set $\mathfrak{D}$ should be selected such that $P_k(\mathfrak{D}) \supset \mathfrak{B}_k$ for each $k$. Then, $\tilde C_k$ is selected such that $\left\| {\tilde c_k(x,t)} \right\|\leq \tilde C_k$ for all $(x,t)\in\mathfrak{D}\times\mathbb{R}_{\geq 0}$. Assuming that the agents are subject to bounded inputs, i.e., $u_k(t)\in\mathcal{U}_k$ for some compact set $\mathcal{U}_k$, an estimate of $\tilde C_k$ can be obtained.
	In addition, considering Assumption \ref{concave} and barrier function construction according to Section \ref{solution3}, $\left\| {\frac{{\partial \mathfrak{H}^k({{\bar x}_k},t)}}{{\partial {\bar x}_k}}} \right\|$ is upper bounded and this bound can be acquired, too.
	\end{remark}	
\begin{theorem}\label{feasiblity}
Let the solution to the QP \eqref{CLF} be denoted as $z_k^*(\cdot)$. Assume that $\delta_k\geq\tilde C_k\left\| {\frac{{\partial \mathfrak{H}^k({{\bar x}_k},t)}}{{\partial {\bar x}_k}}} \right\|-\varepsilon_k^*$  for all $(\bar x_k,t)\in P_k(\mathfrak{D})\times\mathbb{R}_{\geq 0}$. If the solution $z_k^*(\cdot)$ is continuous on $P_k(\mathfrak{D})\backslash \mathfrak{C}_k(t)$, then under the control input $u_k(\cdot)=u_k^*(\cdot)$ the closed-loop trajectories of \eqref{multi_sys}  reach the set $\mathfrak{C}_{k,rf}$ in a fixed-time $T_{k}$ given by \eqref{T}, with $a_{1k}=\alpha_k$, $a_{2k}=\beta_k$ and $a_{3k}=\delta_k$.
\end{theorem}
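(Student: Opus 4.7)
The plan is to reduce the statement to Proposition \ref{convergence} by showing that the closed-loop dynamics under $u_k=u_k^*$ satisfies inequality \eqref{Ineq1} pointwise, after which the fixed-time convergence to $\mathfrak{C}_{k,rf}$ with the bound \eqref{T} follows automatically, with the identifications $a_{1k}=\alpha_k$, $a_{2k}=\beta_k$, $a_{3k}=\delta_k$ coming from the Lyapunov function $V_k=\max\{0,-\mathfrak{H}^k\}$ used in the proof of Proposition \ref{convergence}. I would carry this out in three steps: translate the QP constraint at the optimizer into a lower bound on $\dot{\mathfrak{H}}^k$ along the closed-loop trajectories of \eqref{multi_sys}; absorb the unknown coupling/conflict term $\frac{\partial \mathfrak{H}^k}{\partial \bar{x}_k}\tilde{c}_k$ using the stated assumption relating $\delta_k$, $\tilde{C}_k$ and $\varepsilon_k^*$; then invoke Lemma \ref{RFxT} through Proposition \ref{convergence}.

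For the first step, continuity of $z_k^*$ on $P_k(\mathfrak{D})\setminus\mathfrak{C}_k(t)$ combined with local Lipschitzness of $f_k,g_k$ yields a Carath\'eodory solution on each interval $(s_j^k,s_{j+1}^k)$ between the known switches of $\mathfrak{H}^k$. Along this solution, the chain rule and the block structure of $\tilde{f}_k,\tilde{g}_k$ in \eqref{multi_sys} give
\begin{equation*}
\dot{\mathfrak{H}}^k=\frac{\partial \mathfrak{H}^k}{\partial x_k}\bigl(f_k(x_k,t)+g_k(x_k,t)u_k^*\bigr)+\frac{\partial \mathfrak{H}^k}{\partial \bar{x}_k}\tilde{c}_k(x,t)+\frac{\partial \mathfrak{H}^k}{\partial t}.
\end{equation*}
Substituting the active QP constraint \eqref{CLF} at $(u_k^*,\varepsilon_k^*)$ replaces the first and third terms by a lower bound involving $\delta_k-\varepsilon_k^*$ together with the $\alpha_k,\beta_k$ contributions.

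For the second step, Cauchy--Schwarz together with $\|\tilde{c}_k\|\le\tilde{C}_k$ yields $\frac{\partial \mathfrak{H}^k}{\partial \bar{x}_k}\tilde{c}_k\ge-\bigl\|\partial \mathfrak{H}^k/\partial \bar{x}_k\bigr\|\tilde{C}_k$, and the hypothesis $\delta_k\ge\tilde{C}_k\bigl\|\partial \mathfrak{H}^k/\partial \bar{x}_k\bigr\|-\varepsilon_k^*$ collapses the resulting inequality exactly into the form of \eqref{Ineq1}. Proposition \ref{convergence} then applies verbatim on every interval $(s_j^k,s_{j+1}^k)$ with the stated parameter choices $\gamma_{1k}=1+1/\mu_k$, $\gamma_{2k}=1-1/\mu_k$, certifying fixed-time convergence of $\bar{x}_k(t)$ into $\mathfrak{C}_{k,rf}$ in time $T_k$ given by \eqref{T} and independent of $\bar{x}_k(0)$. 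The bound survives concatenation across switches because, by the construction recalled in Section \ref{solution3}, $\lim_{\tau\to s_l^{k-}}\mathfrak{C}_k(\tau)\subseteq\mathfrak{C}_k(s_l^k)$, so $V_k$ cannot jump upward and the integral estimate underlying Lemma \ref{RFxT} carries through monotonically.

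The main obstacle is the coupling in step two: the hypothesis on $\delta_k$ involves the optimal slack $\varepsilon_k^*$, which is only known a posteriori as the QP value. Remark \ref{re1} provides a route for estimating $\tilde{C}_k$ and $\|\partial \mathfrak{H}^k/\partial \bar{x}_k\|$ via the bounded input set and the concave barrier construction, but in a practical implementation one must either conservatively over-estimate $\delta_k$ so that the inequality holds for any admissible $\varepsilon_k^*\ge 0$, or verify it online. A secondary but minor point is ensuring closed-loop well-posedness at switching instants, which follows from the piecewise continuity convention adopted in Section \ref{solution3} and the left-continuity of $\mathfrak{C}_k(\cdot)$.
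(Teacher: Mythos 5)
Your proposal is correct and follows essentially the same route as the paper, which proves this theorem in a single sentence by invoking Proposition \ref{convergence} with $\delta_k\geq\bigl\|\partial \mathfrak{H}^k/\partial \bar{x}_k\bigr\|\tilde C_k-\varepsilon_k$. Your write-up simply fills in the intermediate steps (substituting the active QP constraint, absorbing the coupling term via Cauchy--Schwarz, and handling the switching instants) that the paper leaves implicit.
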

\begin{proof}
Considering Proposition \ref{convergence} for $\delta_k\geq\left\| {\frac{{\partial \mathfrak{H}^k({{\bar x}_k},t)}}{{\partial {\bar x}_k}}} \right\|\tilde C_k-\varepsilon_k$, convergence to the set $\mathfrak{C}_{k,rf}$, which is provided by the user-defined bounds for $\mathfrak{H}^k({{\bar x}_k},t)$ as in \eqref{D3}, will be achieved in the presence of couplings and task conflictions in a least violating way. 
\end{proof}
\section{Simulations}\label{sim}
Consider a multi-agent system consisting of $M:=3$ omnidirectional robots denoting by $x_k:=\left[ {p_k^T ,x_{k,3}} \right]^T\in\mathbb{R}^3$, $k\in\{{1,\cdots,M}\}$, in which $p_k:=\left[ {x_{k,1} ,x_{k,2}} \right]^T$ and  $x_{k,3}$ represent the robot's position and orientation with respect to the first coordinate, respectively  \cite{liu2008omni}.
The agent dynamics are subject to $\dot x_k=f_k(x,t)+g_k(x_k,t)u_k+c_k(x,t)$, where $g_k:=\left[ {\begin{array}{*{20}{c}}
{\cos ({x_{k,3}})}&{ - \sin ({x_{k,3}})}&0\\
{\sin ({x_{k,3}})}&{\cos ({x_{k,3}})}&0\\
0&0&1
\end{array}} \right](B_k^T)^{-1}R_k$ with $B_k:=\left[ {\begin{array}{*{20}{c}}
0&{\cos (\pi/6)}&{-\cos (\pi/6)}\\
-1&{\sin (\pi/6)}&{\sin (\pi/6)}\\
L_k&L_k&L_k
\end{array}} \right]$ to model the geometric constraint. Moreover, $R_k=0.02$ is the wheel radius and $L_k=0.2$ describes the radius of the robot body.
Furthermore, $f_k(x,t)$ are locally Lipschitz continuous functions describing the induced dynamical couplings for the purpose of collision avoidance.
We follow the the example of \cite{lindemann2018control} by adding the coupling effects of other agents and considering different tasks to show the effect of changing the parameters during the conflictions.
We pick $\tilde C_k=1$, $k\in\{{1,2,3}\}$, to model the disturbances or conflicting behavior of other agents. 
Consider the formulae
  $\phi_1:=G_{\left[ {15,90} \right]}(\left\| {{p_1} + {g_1} - {p_2}} \right\| \le {2})\wedge G_{\left[ {25,35} \right]}(\left\| {{p_1} + {g_2} - {p_3}} \right\| \le {7.7})\wedge F_{\left[ {50,90} \right]}(\left\| {{p_1} - {g_3}} \right\| \le {2})$, $\phi_2:=G_{\left[ {15,90} \right]}(\left\| {{p_2} - {g_1} - {p_1}} \right\| \le {2})\wedge F_{\left[ {30,35} \right]}(\left\| {{p_2}- {g_4}} \right\| \le {4})\wedge F_{\left[ {50,90} \right]}(\left\| {{p_2} +{g_1}- {p_3}} \right\| \le {5})$,
  $\phi_3:=G_{\left[ {25,35} \right]}(\left\| {{p_3} - {g_2} - {p_1}} \right\| \le {7.7})
  \wedge F_{\left[ {40,60} \right]}(\left\| {{p_3} - {g_5}} \right\| \le {5})\wedge F_{\left[ {50,90} \right]}(\left\| {{p_3} -{g_1}- {p_2}} \right\| \le {5})$, 
where $g_1=\left[ {0.8,0} \right]^T$, $g_2=\left[ {0,-0.8} \right]^T$, $g_3=\left[ {-1.2,1.2} \right]^T$, $g_4=\left[ {1.2,1.2} \right]^T$, $g_5=\left[ {1.2,-1.2} \right]^T$.
We first choose the parameters of the QP formulation as $\mu_k=4$, $\alpha_k=\beta_k=1$, $k\in\{{1,2,3}\}$. Then, we get $\delta_k=0.9741$ and considering \eqref{D3}, the upper bound of $-{{\frac{{{\delta_{k}} }}{{2{\sqrt {{\alpha_{k}}{\beta_{k}}} }}}}}=-0.487$ is acquired for the TFCBFs $\mathfrak{H}^k(\bar x_k,t)$, $k\in\{{1,2,3}\}$, as can be seen in Figure \ref{barrier} as well as the agent trajectories presented in Figure \ref{traj}.
We change the value of parameters $\alpha_k, \beta_k$ to $0.4$. This leads to $\delta_k=0.9713$ and $\mathfrak{H}^k(\bar x_k,t)\geq -{(\frac{{{\delta_{k}} + \sqrt {{\delta_{k}}^2 - 4{\alpha_{k}}{\beta_{k}}} }}{{2{\alpha_{k}}}})}^{\mu_k}=13.09$, $k\in\{{1,2,3}\}$. Therefore, a higher deviation from the desired set as well as a faster settling-time in the main switching instant of $t=35 \;\rm{s}$ is acquired as is shown in Figure \ref{barrier2}. 
Hence, the fixed-time convergence criterion allows us to characterize the behavior of TFCBFs independent of the agents initial conditions.
The computation times on an Intel Core i5-8365U with 16 GB of RAM are about $2.45$ms.
	\begin{figure}
		\centering
		\hspace*{-0.5cm}
		\psfragscanon 
			\psfrag{a1111111111}[Bc][B1][0.35][0]{$\mathfrak{H}^1(\bar x_1,t)$}
			\psfrag{b1111111111}[Bc][B1][0.35][0]{$\mathfrak{H}^2(\bar x_2,t)$}
			\psfrag{c1111111111}[Bc][B1][0.35][0]{$\mathfrak{H}^3(\bar x_3,t)$}
		\includegraphics[width=6cm]{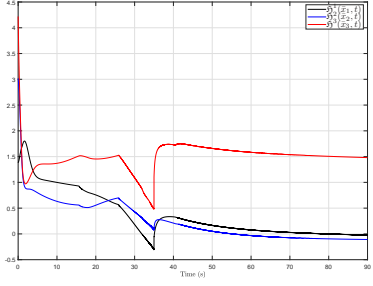}
		\caption{Fixed-time convergent barrier functions evolution for $\alpha_k=\beta_k=1$.}
		\label{barrier}
		 \psfragscanoff
	\end{figure}
		\begin{figure}
		\centering
		\hspace*{-0.5cm}
		\includegraphics[width=6cm]{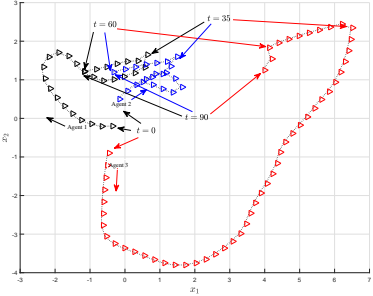}
		\caption{Robot trajectories. The triangles denote the orientation for $\alpha_k=\beta_k=1$.}
		\label{traj}
	\end{figure}
		\begin{figure}
		\centering
		\hspace*{-0.5cm}
		\psfragscanon 
			\psfrag{d1111111111}[Bc][B1][0.35][0]{$\mathfrak{H}^1(\bar x_1,t)$}
			\psfrag{e1111111111}[Bc][B1][0.35][0]{$\mathfrak{H}^2(\bar x_2,t)$}
			\psfrag{f1111111111}[Bc][B1][0.35][0]{$\mathfrak{H}^3(\bar x_3,t)$}
		\includegraphics[width=6cm]{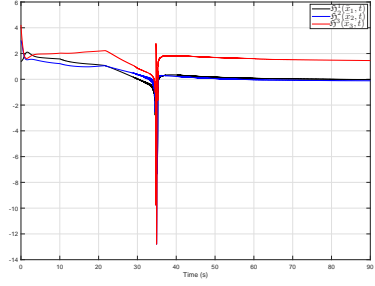}
		\caption{Fixed-time convergent barrier functions evolution for $\alpha_k=\beta_k=0.4$.}
		\label{barrier2}
		 \psfragscanoff
	\end{figure}

\section{Conclusion}\label{conc}
Based on a new notion of time-varying fixed-time convergent control barrier functions, we presented a feedback control strategy to find robust solutions for the performance of the multi-agent systems under conflicting local STL tasks. In particular, the lower bound of the introduced TFCBFs and the finite convergence time can be characterized in a user-specified way, independent of the initial conditions of the agents. Future works extend these results to more general collaborative tasks, including leader-follower topologies.

\bibliographystyle{IEEEtran}
\bibliography{references_STL}

\end{document}